\newcommand{\hypercolor}{blue}
\theoremstyle{definition}
\newcounter{ex}
\NewDocumentEnvironment{ex}{o}
 {
  \refstepcounter{ex}
  \par\smallskip
  \noindent
  \textbf{\IfNoValueTF{#1}{Example}{Example~of~#1}~}
 }
 {\par\smallskip
 }
\newcommand{\mG}{\mathcal{G}}
\newcommand{\mI}{\mathcal{I}}
\newcommand{\mM}{\mathcal{M}}
\newcommand{\mP}{\mathcal{P}}
\newcommand{\mS}{\mathcal{S}}
\newcommand{\mZ}{\mathcal{Z}}
\newcommand{\PS}{P_{\rm S}}
\newcommand{\ident}{\hat{1}}
\newcommand{\Real}{\mathbf{R}}
\newcommand{\Complex}{\mathbf{C}}
\newcommand{\Tr}{{\rm Tr}}
\def\gauss_sym#1{{\lfloor #1 \rfloor}}
\newcommand{\AB}{{A \otimes B}}
\newcommand{\seq}{\mathrm{SEQ}}
\newcommand{\Meas}{\mathbf{Meas}}
\newcommand{\Prep}{\mathbf{Prep}}
\newcommand{\Measseq}{\Meas^\seq}
\newcommand{\MeasseqAB}{\Measseq_{A \to B}}
\newcommand{\MeasLOCC}{\Meas^\mathrm{LOCC}}
\newcommand{\MeasSEP}{\Meas^\mathrm{SEP}}
\newcommand{\MeasPT}{\Meas^\mathrm{PT}}
\newcommand{\Measgen}{\mM}
\renewcommand{\ident}{\mathbbm{1}}
\newcommand{\id}{\mathrm{id}}
\newcommand{\gdis}{\raisebox{-.1em}{\includegraphics[scale=0.5]{figures/text_discard.pdf}}}
\newcommand{\gdisT}{\raisebox{-.1em}{\includegraphics[scale=0.5]{figures/text_discardT.pdf}}}
\newcommand{\cross}{\times}
\newcommand{\N}{\mathrm{N}}
\renewcommand{\P}{\mathrm{P}}
\newcommand{\F}{\mathrm{F}}
\newcommand{\Proc}{\mathbf{Proc}}
\newcommand{\ProcF}{\mathbf{Proc}^\F}
\newcommand{\ProcD}{\mathbf{Proc}^\mathrm{D}}
\newcommand{\St}{\mathbf{St}}
\newcommand{\StP}{\mathbf{St}^\P}
\newcommand{\StF}{\mathbf{St}^\F}
\newcommand{\StN}{\mathbf{St}^\N}
\newcommand{\StNP}{\mathbf{St}^{\N\P}}
\newcommand{\Eff}{\mathbf{Eff}}
\newcommand{\EffF}{\mathbf{Eff}^\F}
\newcommand{\Scalar}{\mathbf{Scalar}}
\newcommand{\ScalarF}{\mathbf{Scalar}^\F}
\renewcommand{\Vec}{\mathbf{V}}
\newcommand{\NA}{{N_A}}
\newcommand{\NB}{{N_B}}
\newcommand{\NC}{{N_C}}
\newcommand{\ol}{\overline}
\newcommand{\eqlocal}{\overset{\mathrm{local}}{=}}
\renewcommand{\PS}{P}
\newcommand{\Aut}{\mathrm{Aut}}
\newcommand{\sym}{\diamondsuit}
\newcommand{\termdef}[1]{\textit{#1}}
\newcommand{\InsertPDF}[1]{\iffigure\includegraphics[scale=1.0]{#1}\fi}
\newcommand{\craket}[1]{\mathinner{({#1})}}
\newcommand{\cra}[1]{{\mathinner{({#1}|}}}
\newcommand{\cket}[1]{{\mathinner{|{#1})}}}
\let\protect\relax
  \xdef\Craket{\protect\expandafter\noexpand\csname Craket \endcsname}
\gdef\csname Craket \endcsname#1{\begingroup
     \ifx\SavedDoubleVert\relax
       \let\SavedDoubleVert\|\let\|\BraDoubleVert
     \fi
     \mathcode`\|32768\let|\BraVert
     \left({#1}\right)\endgroup}
\definecolor{gray}{RGB}{128,128,128}
\newcommand{\Discard}[1]{}
\begin{document}
\title{Discrimination of symmetric states in operational probabilistic theory}
\author{
 \IEEEauthorblockN{Kenji~Nakahira}
 \IEEEauthorblockA{
  Quantum Information Science Research Center, \\
  Quantum ICT Research Institute, Tamagawa University \\
  6-1-1 Tamagawa-gakuen, Machida, Tokyo 194-8610 Japan\\
  {\footnotesize\tt E-mail: nakahira@lab.tamagawa.ac.jp} 
  \vspace*{-2.64ex}}
}

\maketitle

\begin{abstract}
 A state discrimination problem in an operational probabilistic theory (OPT)
 is investigated in diagrammatic terms.
 It is well-known that, in the case of quantum theory, if a state set has a certain symmetry,
 then there exists a minimum-error measurement having the same type of symmetry.
 However, to our knowledge, it is not yet clear whether this property also holds
 in a more general OPT.
 We show that it also holds in OPTs, i.e., for a symmetric state set,
 there exists a minimum-error measurement that has the same type of symmetry.
 It is also shown that this result can be utilized to optimize over
 a restricted class of measurements, such as sequential or separable measurements.
\end{abstract}

\IEEEpeerreviewmaketitle

\section{Introduction}

Operational probabilistic theories (OPTs) and other similar theories,
such as generalized probabilistic theories, provide a general operational framework
that allows us to better understand the physical structure of quantum theory
\cite{Bar-2007,Bar-Bar-Cla-Lei-2010,Chi-Dar-Per-2010,Chi-Dar-Per-2011,Gog-Sca-2017}.
OPTs can be interpreted as a generalization of probability theory,
including classical probability theory, quantum theory, and many others
(such as the theory of Popescu-Rohrlich boxes \cite{Pop-Roh-1994}).
One of the motivations for using OPTs is to investigate quantum processes
from an operational point of view, which helps us to deeply understand quantum theory.
Another motivation is that an OPT might be useful in developing new physical
theories, such as a theory of quantum gravity.

One of the fundamental problems in probability theory is the state discrimination.
In the case of quantum theory, a vast number of studies have been carried out to
obtain an optimal measurement with respect to some criteria
(e.g., \cite{Hol-1973,Yue-Ken-Lax-1975,Hel-1976,Eld-Meg-Ver-2003,Bel-1975,Ban-Kur-Mom-Hir-1997,Usu-Tak-Hat-Hir-1999,Eld-For-2001,Bar-Cro-2009,Nak-Kat-Usu-2015-general}).
Although obtaining a closed-form analytical solution for an optimal measurement
is generally very difficult,
it is known that if a state set has a certain symmetry,
then there exists an optimal measurement having the same type of symmetry
(e.g.,\cite{Bel-1975,Ban-Kur-Mom-Hir-1997,Usu-Tak-Hat-Hir-1999,Eld-For-2001,Usu-Usa-Tak-Hat-2002,And-Bar-Gil-Hun-2002,Kat-Hir-2003,Cho-Hsu-2003,Eld-Meg-Ver-2004,Saw-Tsu-Usu-2007,Car-Vig-2010,Nak-Usu-2012-self,Nak-Usu-2013-group}).
This property allows us to simplify finding an optimal measurement for a symmetric state set
analytically and/or numerically.
However, to our knowledge, this result has not been extended to a more general OPT.
Note that it would not be surprising if this result does not hold in general
since the space of states and that of effects are not symmetric
in the case of a general system of OPTs, while they are highly symmetric
in the case of any quantum system.

In this paper, we investigate a state discrimination problem in an OPT;
we consider the case in which a state set has a certain symmetry,
in which case we show that there exists an optimal measurement that has the same type of symmetry.
This result can be proved without reference to specific algebraic structures
such as Hilbert spaces and operator algebras.
We also show that this result can be applied to the discrimination problem
over a restricted class of measurements.
As examples, we discuss four classes of measurements: sequential,
local operations and classical communication (LOCC), separable,
and partially transformable (PT).
It is worth noting that, in this paper,
we restrict our attention to the minimum-error strategy to simplify the discussion.
However, the results given in this paper can be easily applied to
other various criteria (see \cite{Nak-Kat-Usu-2015-general,Nak-Kat-Usu-2018-seq_gen}) in OPTs.

\section{Brief summary of operational probabilistic theories (OPTs)}

In this section, we briefly review the framework of OPTs.
The framework can be explained in several ways, leading to essentially almost the same formalism.
The proofs of some of the results are not presented in this paper,
which can be found in, e.g., Refs.~\cite{Bar-2007,Chi-Dar-Per-2011,Har-2011,Jan-Lal-2013,Nak-2019}.
Note that we consider only fixed causal structure.
We use diagrammatic representations that are used in Ref.~\cite{Nak-2019}
to represent formulae in an intuitive way,
which is motivated by the work of Coecke, Abramsky,
and others (see, e.g., \cite{Coe-2003,Abr-Coe-2004,Coe-Kis-2017}).

\subsection{Systems and processes}

An OPT consists of a collection of \termdef{systems} and a collection of \termdef{processes}.
Systems and processes respectively represent a physical system (e.g., a photon)
and a particular behavior of a physical process (e.g., a beam splitter).
Each process has input and output systems.
Let $\Proc_{A \to B}$ be the set of all processes having an input system $A$
and an output system $B$, referred to as processes from $A$ to $B$.
A \termdef{trivial} (or empty) system, denoted by $I$, is a special system.
A process from $I$ to $A$, denoted like $\cket{\rho}$, is called a \termdef{state} of $A$.
Similarly, a process from $A$ to $I$, denoted like $\cra{e}$, is
called an \termdef{effect} of $A$.
$\St_A \coloneqq \Proc_{I \to A}$ and $\Eff_A \coloneqq \Proc_{A \to I}$
are, respectively, called the \termdef{state space} and \termdef{effect space} of system $A$.
A process from $I$ to $I$ is called a \termdef{scalar}.
Let $\Scalar \coloneqq \Proc_{I \to I}$.

In diagrammatic terms,
a process $f \in \Proc_{A \to B}$, a state $\cket{\rho} \in \St_A$,
an effect $\cra{e} \in \Eff_A$, and a scalar $p \in \Scalar$ are depicted as
\begin{alignat}{1}
 \InsertPDF{figures/component.pdf} ~\raisebox{.1em}{.}
\end{alignat}
Labeled wires (labels are often omitted) represent systems, while boxes represent processes.
Each process has an input wire at the bottom and an output wire at the top.
$I$ is represented by `no wire'.
For a scalar, the box will be omitted.
Diagrammatic representations can be interpreted such as data flow diagrams,
where time increases from the bottom to the top.

\begin{ex}[(fully) quantum theory]
 For simplicity, we consider only finite-dimensional systems in the examples of quantum theory.
 Let $\Complex$ be the set of all complex numbers.
 Also, let $\mS(\Complex^n)$ and $\mS_+(\Complex^n)$ be, respectively, the sets of
 all complex Hermitian matrices and all complex positive semidefinite matrices of order $n$.
 $\St_A$ and $\Eff_A$ are isomorphic to $\mS_+(\Complex^\NA)$,
 where $\NA$ is a natural number determined by a system $A$.
 Note that $\mS_+(\Complex^\NA)$ is a symmetric cone, whose shape is highly symmetric.
 In particular, $N_I = 1$ holds,
 i.e., $\St_I = \Scalar \cong \mS_+(\Complex) \cong \Real_+$,
 where $\Real_+$ is the set of all nonnegative real numbers.
 $\Proc_{A \to B}$ is isomorphic to the space of all CP
 maps from $\mS(\Complex^\NA)$ to $\mS(\Complex^\NB)$.
 In the examples of quantum theory, we will identify a process with its corresponding CP map.
 Also, we will identify a state (or effect) with the corresponding positive semidefinite matrix.
\end{ex}

\subsection{Sequential and parallel compositions}

Two processes can be composed sequentially whenever the output system of one and
the input system of the other are the same.
The sequential composition of $f \in \Proc_{A \to B}$ and $g \in \Proc_{B \to C}$
is also a process, denoted as $g \circ f \in \Proc_{A \to C}$.
When we write $g \circ f$, we always assume that the output system of $f$ and
the input system of $g$ are equal.
For any $\cket{\rho} \in \St_A$ and $\cra{e} \in \Eff_A$,
$\cra{e} \circ \cket{\rho} \in \Scalar$ is denoted by $\craket{e|\rho}$.
$g \circ f$ and $\craket{e|\rho}$ are respectively depicted as
\begin{alignat}{1}
 \InsertPDF{figures/process_fg_circ.pdf} ~\raisebox{.0em}{.}
 \label{eq:process_circ}
\end{alignat}

Any two systems and processes can be composed in parallel.
The parallel composition of two systems, $A$ and $B$, is a system, denoted by $A \otimes B$.
Assume that $I \otimes A = A = A \otimes I$ holds.
The parallel composition of $f \in \Proc_{A \to B}$ and $g \in \Proc_{C \to D}$
is a process from $A \otimes C$ to $B \otimes D$, denoted as $f \otimes g$.
$f \otimes g$ is diagrammatically depicted as
\begin{alignat}{1}
 \InsertPDF{figures/process_fg_otimes.pdf} ~\raisebox{.5em}{.}
 \label{eq:process_otimes}
\end{alignat}

A collection of connected processes will be called a \termdef{diagram}.
These sequential and parallel compositions are associative, e.g.,
$h \circ (g \circ f) = (h \circ g) \circ f$ holds for
any $f \in \Proc_{A \to B}$, $g \in \Proc_{B \to C}$, and $h \in \Proc_{C \to D}$.
Assume that
\begin{alignat}{1}
 (g_1 \otimes g_2) \circ (f_1 \otimes f_2) &= (g_1 \circ f_1) \otimes (g_2 \circ f_2),
 \label{eq:process_product2}
\end{alignat}
or diagrammatically
\begin{alignat}{1}
 \InsertPDF{figures/process_product2.pdf} ~\raisebox{.5em}{,}
 \label{eq:process_product2_diagram}
\end{alignat}
holds for four any processes $f_1$, $f_2$, $g_1$, and $g_2$,
where the auxiliary lines (dashed lines) are drawn to guide the eye.
For any scalar $a$ and process $f$, $a \otimes f$ is denoted by $af$ or $a \cdot f$.
One can see that $(ag) \circ f = a(g \circ f) = g \circ (af)$,
$(af) \otimes h = a(f \otimes h) = f \otimes (ah)$, and
$a \circ b = ab$ hold for any scalars $a$ and $b$ and any processes $f$, $g$, and $h$.

\begin{ex}[quantum theory]
 $N_{A \otimes B} = \NA\NB$ holds for any systems $A$ and $B$.
 Since $\St_{I \otimes A} \cong \St_{A \otimes I} \cong \St_A \cong \mS_+(\Complex^\NA)$
 holds from $N_I = 1$,
 $I \otimes A$ and $A \otimes I$ can be identified with $A$.
 For two processes $f$ and $g$, $g \circ f$ is the CP map
 satisfying $(g \circ f)[\cket{\rho}] \coloneqq g[f[\cket{\rho}]]$.
 In particular, $\craket{e|\rho} = \Tr[\cra{e} \cdot \cket{\rho}]$ holds
 ($\cdot$ is the matrix product).
 For two states $\cket{\rho}$ and $\cket{\sigma}$,
 $\cket{\rho} \otimes \cket{\sigma}$ is
 the tensor product of the matrices $\cket{\rho}$ and $\cket{\sigma}$.
 $f \otimes g$ is the CP map defined as
 $(f \otimes g)[\cket{\rho} \otimes \cket{\sigma}] \coloneqq
 f[\cket{\rho}] \otimes g[\cket{\sigma}]$.
\end{ex}

\subsection{Identity processes and discarding effects}

An \termdef{identity process} on $A$, denoted by $\id_A$ or simply $\id$,
is the process satisfying
(a) $f \circ \id_A = f = \id_B \circ f$,
(b) $\id_A \otimes \id_B = \id_{A \otimes B}$,
and (c) $f \otimes \id_I = f = \id_I \otimes f$,
where $A$ and $B$ are any systems and $f$ is any process from $A$ to $B$.
Assume that there exists $\id_A$ for each system $A$.
Diagrammatically, $\id_A$ is depicted as
\begin{alignat}{1}
 \InsertPDF{figures/process_id_def.pdf} ~\raisebox{.0em}{.}
 \label{eq:process_id_def}
\end{alignat}
$\id_I$ is depicted as empty space.
The above property (a) is depicted as:
\begin{alignat}{1}
 \InsertPDF{figures/process_id_f.pdf} ~\raisebox{.0em}{,}
 \label{eq:process_id_f}
\end{alignat}
where the auxiliary boxes indicate the identity processes.
This, intuitively, implies that the length of lines does not change diagrams.
It also follows from Eq.~\eqref{eq:process_id_f} that, for any processes $f$ and $g$,
\begin{alignat}{1}
 \InsertPDF{figures/process_fg.pdf}
 \label{eq:process_fg}
\end{alignat}
holds.
Intuitively, this yields that the vertical shifts of processes do not affect diagrams.

Assume that, for any systems $A$ and $B$,
there exists a process $\cross_{A \otimes B} \in \Proc_{A \otimes B \to B \otimes A}$,
called a \termdef{swap process} and diagrammatically depicted by
\begin{alignat}{1}
 \InsertPDF{figures/process_cross_def.pdf}
 ~\raisebox{.0em}{,}
 \label{eq:cross_def}
\end{alignat}
such that
\begin{alignat}{1}
 \InsertPDF{figures/process_cross_f.pdf} \label{eq:cross_f}
\end{alignat}
(i.e., $\cross_{A',B} \circ (f \otimes \id_B) \circ \cross_{B,A} = \id_B \otimes f$)
holds for any systems $A$, $A'$, and $B$ and any process $f \in \Proc_{A \to A'}$.
Also, assume that $\cross_{A,I} = \id_A = \cross_{I,A}$,
i.e.,
\begin{alignat}{1}
 \InsertPDF{figures/process_id_curve.pdf} ~\raisebox{.0em}{,}
 \label{eq:process_id_def}
\end{alignat}
and $\cross_{A,B \otimes C} = (\id_B \otimes \cross_{A,C}) \circ (\cross_{A,B} \otimes \id_C)$
hold for any systems $A$, $B$, and $C$.

Let us consider a trivial measurement of system $A$, which always gives the same outcome
regardless of the input state.
The effect representing an event associated to the outcome of a trivial measurement
is called a \termdef{discarding effect} and denoted by $\cra{\gdis_A}$,
or simply $\cra{\gdis}$, which is depicted as
\begin{alignat}{1}
 \InsertPDF{figures/discard.pdf} ~\raisebox{.0em}{.}
 \label{eq:discard}
\end{alignat}
$\cra{\gdis_A}$ has a natural operational intuition:
one performs any measurement on system $A$ and then discards the results.

\begin{ex}[quantum theory]
 The identity process is the identity map.
 The discarding effect of $A$ is the identity matrix of order $\NA$,
 denoted by $\ident_\NA$.
\end{ex}

\subsection{Probabilistic behavior}

Assume that $\Scalar = \Real_+$ holds, i.e., each scalar is identified with a nonnegative real number.
Let $\ScalarF \coloneqq \{ p \in \Scalar : p \le 1 \}$.
A state $\cket{\rho} \in \St_A$ is called \termdef{feasible} if $\craket{\gdis_A|\rho} \in \ScalarF$
holds; in particular, $\cket{\rho}$ is called \termdef{normalized} (or deterministic)
if $\craket{\gdis_A|\rho} = 1$ holds.
Let $\StF_A$ and $\StN_A$ be, respectively, the sets of all feasible and normalized states of $A$.
A process $f \in \Proc_{A \to B}$ is called \termdef{feasible} if
$(f \otimes \id_E) \circ \cket{\sigma} \in \StF_{B \otimes E}$ holds for any system $E$
and $\cket{\sigma} \in \StF_{A \otimes E}$.
Also, $f \in \Proc_{A \to B}$ is called \termdef{deterministic} if
$(f \otimes \id_E) \circ \cket{\sigma} \in \StN_{B \otimes E}$ holds for any system $E$
and $\cket{\sigma} \in \StN_{A \otimes E}$.
Let $\ProcF_{A \to B}$ and $\ProcD_{A \to B}$ be, respectively,
the sets of all feasible and deterministic processes from $A$ to $B$.
Also, let $\EffF_A \coloneqq \ProcF_{A \to I}$.
Assume that each scalar consisting of the sequential and/or parallel compositions of $k$
feasible processes $f_1,\ldots,f_k$, for example, the scalar depicted by
\begin{alignat}{1}
 \InsertPDF{figures/process_f1_fk.pdf}
 \label{eq:process_f1_fk} 
\end{alignat}
is the probability of the joint occurrence of $f_1,\ldots,f_5$%
\footnote{In OPTs, a certain set of scalars is associated with a probability distribution.
See, e.g., Ref.~\cite{Chi-Dar-Per-2010} for details.}.
In particular, for any $p,q \in \ScalarF$, $pq$, i.e.,
the probability of the joint occurrence of $p$ and $q$,
is the product of real numbers $p$ and $q$.
Assume $\cra{\gdis_I} = 1$ and $\cra{\gdis_A} \in \ProcD_{A \to I}$.
1 is the unique deterministic scalar and $\id_I = 1$ holds.
It follows that any process consisting of the sequential and/or parallel compositions of
deterministic (resp. feasible) processes is deterministic (resp. feasible).
One can easily see $\StF_A = \ProcF_{I \to A}$, $\ScalarF = \StF_I = \ProcF_{I \to I}$,
$\StN_A = \ProcD_{I \to A}$, and $\ProcD_{A \to B} \subset \ProcF_{A \to B}$.
Any $\cket{\rho} \in \StF_A$ is in the form $\cket{\rho} = p \cket{\rho^\N}$
with $p \in \ScalarF$ and $\cket{\rho^\N} \in \StN_A$
(note that $p = \craket{\gdis|\rho}$ holds from $\craket{\gdis|\rho^\N} = 1$),
which means that $\cket{\rho}$ can be identified with
the process preparing the normalized state $\cket{\rho^\N}$ with probability $p$.

From the definition, a scalar larger than 1 is unfeasible.
Unfeasible scalars cannot be interpreted as probabilities
and thus are not intuitive.
However, it is mathematically convenient to consider unfeasible scalars,
so we assume $\Scalar = \Real_+$.
Similarly, $\Proc_{A \to B}$ is defined as
\begin{alignat}{1}
 \Proc_{A \to B} &\coloneqq \{ a f : a \in \Scalar, f \in \ProcF_{A \to B} \}.
 \label{eq:process_Proc_def}
\end{alignat}
Although unfeasible processes exist in each process space $\Proc_{A \to B}$
(i.e., $\ProcF_{A \to B} \subsetneq \Proc_{A \to B}$),
Eq.~\eqref{eq:process_Proc_def} implies that any unfeasible process is expressed as
scalar multiplication of a feasible process.

We consider the following diagram, denoted by $u:\Proc_{A \to B} \to \Scalar$,
that maps a process $f \in \Proc_{A \to B}$ to
a scalar $u(f) \coloneqq \cra{u_2} \circ (f \otimes \id_E) \circ \cket{u_1} \in \Scalar$,
where $\cket{u_1} \in \St_{A \otimes E}$ and $\cra{u_2} \in \Eff_{B \otimes E}$ hold.
$u$ can be interpreted as a set of a system $E$, a state $\cket{u_1} \in \St_{A \otimes E}$,
and an effect $\cra{u_2} \in \Eff_{B \otimes E}$,
which is diagrammatically depicted as
\begin{alignat}{1}
 \InsertPDF{figures/u_scalar.pdf} ~\raisebox{1.0em}{.}
 \label{eq:u_scalar}
\end{alignat}
Any scalar that includes $f \in \Proc_{A \to B}$ is expressed in the form $u(f)$
with some diagram $u:\Proc_{A \to B} \to \Scalar$;
for example,
\begin{alignat}{1}
 \begin{overpic}{figures/uf_example.pdf}
  \put(33,22){\footnotesize\eqref{eq:cross_f}}
 \end{overpic}
 ~\raisebox{1.0em}{,}
 \label{eq:uf_example}
\end{alignat}
where $\cket{u_1}$ and $\cra{u_2}$ are, respectively,
the state and effect enclosed by the auxiliary boxes.

For two processes $f,f' \in \Proc_{A \to B}$, $f = f'$ is defined as
\begin{alignat}{1}
 \InsertPDF{figures/process_eq.pdf} ~\raisebox{.5em}{.}
 \label{eq:process_eq}
\end{alignat}
This means that $f = f'$ holds if they are indistinguishable in a probabilistic sense.
For $f, f' \in \Proc_{A \to B}$, $f \eqlocal f'$ is defined as
\begin{alignat}{1}
 \InsertPDF{figures/process_eqlocal.pdf} ~\raisebox{.5em}{.}
 \label{eq:process_eqlocal}
\end{alignat}
One can easily see that $f \eqlocal f'$ holds if $f = f'$ holds,
but the converse is not necessarily true.
It follows that, in the case of $A = I$ or $B = I$, $f \eqlocal f'$ and $f = f'$ are
always the same, which means
\begin{alignat}{1}
 \InsertPDF{figures/process_eq_state.pdf}
 \label{eq:process_eq_state}
\end{alignat}
and
\begin{alignat}{1}
 \InsertPDF{figures/process_eq_eff.pdf} ~\raisebox{.5em}{.}
 \label{eq:process_eq_eff}
\end{alignat}
$\cra{\gdis_A}$ is the unique deterministic effect of $A$ and that
$\cra{\gdis_{A \otimes B}} \coloneqq \cra{\gdis_A} \otimes \cra{\gdis_B}$ holds,
which is depicted as
\begin{alignat}{1}
 \InsertPDF{figures/discard_AB.pdf} ~\raisebox{.5em}{.}
 \label{eq:discard_AB}
\end{alignat}

\begin{ex}[quantum theory]
 Since $\craket{\gdis_A|\rho} = \Tr~\cket{\rho}$ holds,
 $\cket{\rho} \in \StF_A$ means $\Tr~\cket{\rho} \le 1$.
 Also, $\cket{\rho} \in \StN_A$ means $\Tr~\cket{\rho} = 1$.
 $f \in \ProcF_{A \to B}$ holds if and only if $f$ is a trace non-increasing CP map.
 Also, $f \in \ProcD_{A \to B}$ means that $f$ is a trace-preserving (TP) CP map.
 $\cra{e} \in \EffF_{A \to B}$ means that the maximal eigenvalue of the matrix $\cra{e}$
 is not larger than 1.
 For any $f,f' \in \Proc_{A \to B}$, $f = f'$ and $f \eqlocal f'$ are equivalent.
\end{ex}

\subsection{Process space spans vector space}

Assume that, for any two feasible processes $g_1,g_2 \in \ProcF_{A \to B}$
and any $p \in \ScalarF$,
there exists a feasible process $h \in \ProcF_{A \to B}$ satisfying
\begin{alignat}{1}
 \InsertPDF{figures/process_weighted_sum.pdf}
 \label{eq:process_weighted_sum}
\end{alignat}
for any $u:\Proc_{A \to B} \to \Scalar$.
Such a process $h$ is denoted by $p g_1 + (1-p) g_2$.
This can be interpreted as a probabilistic mixture of
$g_1$ and $g_2$ with probabilities $p$ and $1-p$.
For any two processes $f_1,f_2 \in \Proc_{A \to B}$,
the process $h'$ satisfying
\begin{alignat}{2}
 \InsertPDF{figures/process_sum2.pdf}
 \label{eq:process_sum2}
\end{alignat}
for any $u:\Proc_{A \to B} \to \Scalar$
is called the \termdef{sum of $f_1$ and $f_2$} and denoted by $f_1 + f_2$.
One can easily see that $f_1 + f_2 \in \Proc_{A \to B}$ holds
for any $f_1,f_2 \in \Proc_{A \to B}$.

Let $\mI_n \coloneqq \{ 1,\ldots,n \}$.
A set of effects $\{ \cra{e_m} \in \Eff_A \}_{m \in \mI_M}$ is called
a \termdef{measurement} if
\begin{alignat}{1}
 \InsertPDF{figures/meas_sum_normal.pdf}
 \label{eq:meas_sum_normal}
\end{alignat}
holds.
This means that the sum of probabilities over all possible outcomes is 1
whenever one performs a measurement on a normalized state.
Equation~\eqref{eq:meas_sum_normal} is equivalent to
\begin{alignat}{1}
 \InsertPDF{figures/meas_sum.pdf} ~\raisebox{.3em}{,}
 \label{eq:meas_sum}
\end{alignat}
and thus, from Eq.~\eqref{eq:process_eq_eff}, it follows that
$\{ \cra{e_m} \in \Eff_A \}_{m \in \mI_M}$ is a measurement if and only if
\begin{alignat}{1}
 \InsertPDF{figures/meas_sum2.pdf}
 \label{eq:meas_sum2}
\end{alignat}
holds.

We can consider the real vector space $\Vec_{A \to B}$ spanned by $\Proc_{A \to B}$,
whose elements are formal sums of the form $\sum_i a_i f_i$ with $a_i \in \Real$
and $f_i \in \Proc_{A \to B}$,
where the element $\ol{f} \coloneqq \sum_i a_i f_i$ satisfies
\begin{alignat}{1}
 \InsertPDF{figures/fsum_u_ex.pdf}
 \label{eq:fsum_u_ex}
\end{alignat}
for any $u:\Proc_{A \to B} \to \Scalar$.
We will call an element of $\Vec_{A \to B}$ an \termdef{extended process},
which is denoted with an overline such as $\ol{f}$
(unless it is clearly a process).
Equation~\eqref{eq:fsum_u_ex} implies that the diagram $u$ distributes over addition.
$\Proc_{A \to B} \subseteq \Vec_{A \to B}$ obviously holds.
Any $\ol{f} \in \Vec_{A \to B}$ is expressed by
$\ol{f} = f_+ - f_-$ with some $f_+, f_- \in \Proc_{A \to B}$.
As well as processes, extended processes can be composed sequentially and in parallel.
Specifically, for any $\ol{f} \coloneqq \sum_i a_i f_i \in \Vec_{A \to B}$,
$\ol{g} \coloneqq \sum_j b_j g_j \in \Vec_{B \to C}$,
and $\ol{h} \coloneqq \sum_k c_k h_k \in \Vec_{C \to D}$
with $a_i, b_j, c_k \in \Real$, $f_i \in \Proc_{A \to B}$, $g_j \in \Proc_{B \to C}$,
and $h_k \in \Proc_{C \to D}$,
$\ol{g} \circ \ol{f} = \sum_i \sum_j a_i b_j (g_j \circ f_i)$ and
$\ol{f} \otimes \ol{h} = \sum_i \sum_k a_i c_k (f_i \otimes h_k)$ hold.
This is diagrammatically depicted as
\begin{alignat}{1}
 \InsertPDF{figures/process_ex_sequential.pdf}
 \label{eq:process_ex_sequential}
\end{alignat}
and
\begin{alignat}{1}
 \InsertPDF{figures/process_ex_parallel.pdf} ~\raisebox{.5em}{.}
 \label{eq:process_ex_parallel}
\end{alignat}
Let $\Vec_A \coloneqq \Vec_{I \to A}$ and $\Vec_A^* \coloneqq \Vec_{A \to I}$;
then, $\Vec_A^*$ can be regarded as the dual vector space of $\Vec_A$.
We can easily verify that, for any systems $A$ and $B$, $\Proc_{A \to B}$ is a convex cone.
In particular, $\St_A$ is a convex cone in $\Vec_A$.
The dimension of the real vector space $\Vec_A$ is called
the \termdef{dimension} of $A$.
Assume that $\Proc_{A \to B}$ is closed.
$f \in \Proc_{A \to B}$ is called \termdef{atomic} if $f_1 \propto f$ holds
for any $f_1,f_2 \in \Proc_{A \to B}$ satisfying $f = f_1 + f_2$,
where we denote $f_1 \propto f$ if there exists a scalar $a$ satisfying
either $f_1 = a f$ or $a f_1 = f$.
In particular, states and effects are also called \termdef{pure} if they are atomic.
Let $\StP_A$ be the set of all pure states of $A$.
Also, let $\StNP_A$ be the set of all normalized pure states,
i.e., $\StNP_A \coloneqq \StN_A \cap \StP_A$.
Any $\cket{\rho} \in \St_A$ can be expressed in the form
\begin{alignat}{1}
 \InsertPDF{figures/process_state_decomposed.pdf}
 ~\raisebox{.0em}{,}
 \label{eq:process_state_decomposed}
\end{alignat}
where $\cket{\psi_1},\ldots,\cket{\psi_k} \in \StNP_A$ and
$p_1,\ldots,p_k \in \Real_+$.

One can show that, for any $f \in \Proc_{A \to B}$, we have
\newsavebox{\boxprocessdet}
\sbox{\boxprocessdet}{\InsertPDF{figures/process_deterministic.pdf}}
\newlength{\bwprocessdet}
\settowidth{\bwprocessdet}{\usebox{\boxprocessdet}}
\begin{alignat}{1}
 f \in \ProcD_{A \to B} &\quad\Leftrightarrow\quad
 \parbox{\bwprocessdet}{\usebox{\boxprocessdet}}
 ~\raisebox{.0em}{.}
 \label{eq:process_deterministic}
\end{alignat}
Similarly to Eq.~\eqref{eq:process_deterministic},
an extended process $\ol{f} \in \Vec_{A \to B}$ is called \termdef{deterministic}
if $\cra{\gdis} \circ \ol{f} = \cra{\gdis}$ holds.

An extended process $\ol{f} \in \Vec_{A \to B}$ is called \termdef{reversible} if
there exists $\ol{g} \in \Vec_{B \to A}$, called an \termdef{inverse} of $\ol{f}$, such that
$\ol{g} \circ \ol{f} \eqlocal \id_A$ and $\ol{f} \circ \ol{g} \eqlocal \id_B$.
Such $\ol{g}$ is also reversible.
We denote $A \cong B$ if there exists a reversible extended process from $A$ to $B$.

A process $f \in \Proc_{A \otimes B \to C \otimes D}$ is called \termdef{separable} if
it can be expressed in the form
\begin{alignat}{1}
 \InsertPDF{figures/process_separable.pdf}
 \label{eq:process_separable}
\end{alignat}
with $g_i \in \Proc_{A \to C}$ and $h_i \in \Proc_{B \to D}$.

\begin{ex}[quantum theory]
 The sum of processes is equal to the sum of CP maps.
 In particular, the sum of states (or effects) is the sum of matrices.
 $\Pi \coloneqq \{ \cra{e_m} \in \Eff_A \}_{m \in \mI_M}$ is a measurement if and only if
 $\sum_{m=1}^M \cra{e_m} = \ident_\NA$ holds, i.e.,
 $\Pi$ is a positive operator-valued measure (POVM)
 (note that each effect $\cra{e_m}$ is a positive semidefinite matrix).
 In fully quantum theory, $\Vec_{A \to B}$ is isomorphic to the space of all
 linear maps from $\mS(\Complex^\NA)$ to $\mS(\Complex^\NB)$
 (which are also called Hermitian-preserving maps);
 in particular, $\Vec_A$ and $\Vec_A^*$ are isomorphic to
 $\mS(\Complex^\NA)$.
 $\cket{\psi} \in \St_A$ is pure if and only if
 $\cket{\psi} = \ket{\psi}\bra{\psi}$ holds for some vector $\ket{\psi}$.
\end{ex}

\section{OPTs with classical systems}

\subsection{Classical systems}

We will call an $M$-dimensional system $C$ \termdef{classical} if
there exist $M$ normalized pure states $\cket{1},\ldots,\cket{M} \in \StNP_C$ and
a measurement $\{ \cra{m} \in \Eff_C \}_{m \in \mI_M}$ satisfying
\begin{alignat}{1}
 \InsertPDF{figures/classical_st_meas.pdf}
 \label{eq:classical_st_meas}
\end{alignat}
and
\begin{alignat}{1}
 \InsertPDF{figures/classical_id.pdf} ~\raisebox{.5em}{,}
 \label{eq:classical_id}
\end{alignat}
where $\delta_{m,n}$ is the Kronecker delta.
A classical system is depicted as the dotted line.
One can easily see that any $\cket{\rho} \in \St_C$ is expressed in the form
\begin{alignat}{1}
 \begin{overpic}{figures/classical_state_decomposed.pdf}
  \put(18,26){\footnotesize\eqref{eq:classical_id}}
 \end{overpic}
 ~\raisebox{2.0em}{,}
 \label{eq:classical_state_decomposed}
\end{alignat}
where $p_m \coloneqq \craket{m|\rho} \in \Scalar$.
This immediately gives that $\StNP_C = \{ \cket{m} \}_{m \in \mI_M}$ holds.
Indeed, from Eq.~\eqref{eq:classical_state_decomposed},
$\cket{\rho} \in \StNP_C$ holds if and only if $p_m = \delta_{i,m}$ holds
for some $i \in \mI_M$.
In what follows, we consider an OPT that has an $M$-dimensional classical system $C$.
Note that a classical system is not intrinsically necessary for investigating
a state discrimination problem in an OPT,
but it helps us to express this problem in straightforward diagrammatic terms.

It follows from Eq.~\eqref{eq:classical_id} that
any state of $C \otimes A$ (where $A$ is an arbitrary system that is not classical in general)
is separable.
Indeed, one can easily obtain
\begin{alignat}{1}
 \begin{overpic}{figures/classical_state_composite_proof.pdf}
  \put(16,24){\footnotesize\eqref{eq:classical_id}}
 \end{overpic}
 ~\raisebox{2.0em}{,}
 \label{eq:classical_state_composite_proof}
\end{alignat}
where the state enclosed by the auxiliary box is denoted by $\cket{\rho'_m}$.
Similarly, any state of $A \otimes C$, effect of $C \otimes A$, and effect of $A \otimes C$
are separable.
Moreover, for any $f \in \Proc_{C \to A}$ with a classical system $C$,
$\cket{\sigma} \in \St_{C \otimes E}$, and $\cra{e} \in \Eff_{A \otimes E}$,
Eq.~\eqref{eq:classical_state_composite_proof} yields
\begin{alignat}{1}
 \InsertPDF{figures/classical_eq.pdf} ~\raisebox{.5em}{,}
 \label{eq:classical_eq}
\end{alignat}
where the effect enclosed by the auxiliary box is denoted by $\cra{e_m}$.
One can see from Eq.~\eqref{eq:classical_eq} that,
for any $f,f' \in \Proc_{C \to A}$,
\begin{alignat}{1}
 f = f' &\quad\Leftrightarrow\quad f \eqlocal f'
 \label{eq:fca_eq}
\end{alignat}
holds.
Similarly, Eq.~\eqref{eq:fca_eq} holds for any $f,f' \in \Proc_{A \to C}$
with a classical system $C$.

\subsection{$\cket{\cup}$ and $\cra{\cap}$}

We will introduce the state $\cket{\cup}$ and the effect $\cra{\cap}$
defined as
\begin{alignat}{1}
 \InsertPDF{figures/classical_cup.pdf} ~\raisebox{.5em}{.}
 \nonumber \\
 \label{eq:classical_cup}
\end{alignat}
Clearly, we have
\begin{alignat}{1}
 \begin{overpic}{figures/classical_cup_st_e.pdf}
  \put(18,13){\footnotesize\eqref{eq:classical_cup}}
  \put(57,13){\footnotesize\eqref{eq:classical_st_meas}}
  \put(75,17){\footnotesize\eqref{eq:classical_cup}}
  \put(76,13){\footnotesize\eqref{eq:classical_st_meas}}
 \end{overpic}
 ~\raisebox{.5em}{,}
 \label{eq:classical_cup_st_e}
\end{alignat}
\begin{alignat}{1}
 \begin{overpic}{figures/classical_cup_st_e2.pdf}
  \put(18,12){\footnotesize\eqref{eq:classical_cup}}
  \put(57,12){\footnotesize\eqref{eq:classical_st_meas}}
  \put(75,16){\footnotesize\eqref{eq:classical_cup}}
  \put(76,12){\footnotesize\eqref{eq:classical_st_meas}}
 \end{overpic}
 ~\raisebox{.5em}{,}
 \label{eq:classical_cup_st_e2}
\end{alignat}
and
\begin{alignat}{1}
 \begin{overpic}{figures/classical_cup_st.pdf}
  \put(39,21){\footnotesize\eqref{eq:classical_cup_st_e}}
  \put(75,21){\footnotesize\eqref{eq:classical_st_meas}}
 \end{overpic}
 ~\raisebox{.5em}{.}
 \label{eq:classical_cup_st}
\end{alignat}
One can also show
\begin{alignat}{1}
 \InsertPDF{figures/classical_cup_cap_id.pdf} ~\raisebox{.5em}{.}
 \label{eq:classical_cup_cap_id}
\end{alignat}
Intuitively, this means that a curved line consisting of `$\cup$' and `$\cap$' can be yanked.
The first equality of Eq.~\eqref{eq:classical_cup_cap_id} is obtained from
\begin{alignat}{1}
 \begin{overpic}{figures/classical_cup_cap_id_proof.pdf}
  \put(23,14){\footnotesize\eqref{eq:classical_cup}}
  \put(68,14){\footnotesize\eqref{eq:classical_st_meas}}
  \put(92,14){\footnotesize\eqref{eq:classical_id}}
 \end{overpic}
 ~\raisebox{.5em}{.}
 \label{eq:classical_cup_cap_id_proof}
\end{alignat}
The same is true for the second equality of Eq.~\eqref{eq:classical_cup_cap_id}.
$\cket{\gdisT} \in \St_C$ is defined as
\begin{alignat}{1}
 \InsertPDF{figures/classical_chi.pdf}~\raisebox{.5em}{;}
 \label{eq:classical_chi}
\end{alignat}
then, we have
\begin{alignat}{1}
 \begin{overpic}{figures/classical_chi_gdis.pdf}
  \put(53,40){\footnotesize\eqref{eq:meas_sum2}}
  \put(50,29){\footnotesize\eqref{eq:classical_cup_st_e2}}
 \end{overpic}
 ~\raisebox{.5em}{.}
 \label{eq:classical_chi_gdis}
\end{alignat}

\subsection{Expression of measurements}

In an OPT with a classical system $C$,
any measurement $\{ \cra{e_m} \}_{m \in \mI_M}$ can be expressed by
the process $e \coloneqq \sum_{n=1}^M \cket{n} \cra{e_n} \in \Proc_{A \to C}$.
Diagrammatically,
\begin{alignat}{1}
 \InsertPDF{figures/meas.pdf} ~\raisebox{.5em}{.}
 \label{eq:meas}
\end{alignat}
Each effect $\cra{e_n}$ with $n \in \mI_M$ is obtained from
\begin{alignat}{1}
 \begin{overpic}{figures/meas2.pdf}
  \put(41,48){\footnotesize\eqref{eq:classical_st_meas}}
 \end{overpic}
 ~\raisebox{.5em}{.}
 \label{eq:meas2}
\end{alignat}
One can easily verify that $\cra{\gdis_C} \circ e = \cra{\gdis_A}$,
i.e., $e$ is deterministic from Eq.~\eqref{eq:process_deterministic}.
Conversely, any deterministic process from $A$ to $C$ can be interpreted
as a measurement since $e' \in \ProcD_{A \to C}$ can be depicted as
\begin{alignat}{1}
 \begin{overpic}{figures/meas_determine.pdf}
  \put(18,25){\footnotesize\eqref{eq:classical_id}}
 \end{overpic}
 ~\raisebox{.5em}{,}
 \label{eq:meas_determine}
\end{alignat}
where $\cra{e'_n} \coloneqq \cra{n} \circ e'$.
For each system $A$, $\ProcD_{A \to C}$ is denoted by $\Meas_A^C$ or simply $\Meas_A$.

\subsection{State preparation}

We assume that one of $M$ normalized states of system $A$,
$\cket{\rho^\N_1},\ldots,\cket{\rho^\N_M} \in \StN_A$, is prepared
with prior probabilities $\xi_1,\ldots,\xi_M$ with $\sum_{m=1}^M \xi_m = 1$.
Let $\cket{\rho_m} \coloneqq \xi_m \cket{\rho^\N_m}$.
We consider the process $\rho \in \Proc_{C \to A}$ defined as
\begin{alignat}{1}
 \InsertPDF{figures/st.pdf}~\raisebox{.5em}{.}
 \label{eq:st}
\end{alignat}
Each $\cket{\rho_n}$ is obtained from
\begin{alignat}{1}
 \begin{overpic}{figures/st2.pdf}
  \put(41,45){\footnotesize\eqref{eq:classical_st_meas}}
 \end{overpic}
 ~\raisebox{.5em}{.}
 \label{eq:st2}
\end{alignat}
Note that $\rho$ is not deterministic unless $M = 1$.
We have
\begin{alignat}{1}
 \begin{overpic}{figures/st_chi.pdf}
  \put(33,48){\footnotesize\eqref{eq:st}}
  \put(33,36){\footnotesize\eqref{eq:classical_chi}}
 \end{overpic}
 ~\raisebox{.5em}{.}
 \label{eq:st_chi}
\end{alignat}
$\sum_{m=1}^M \cket{\rho_m}$ is obviously normalized.
We will call a process $\rho \in \Proc_{C \to A}$ with
$\rho \circ \cket{\gdisT} \in \StN_C$ a \termdef{state preparation}.
Let $\Prep_A$ be the set of all state preparations.

\section{Minimum-error measurement for group covariant states} \label{sec:opt}

\subsection{Discrimination problem} \label{subsec:opt_discriminate}

Let us review the problem of discriminating a given set of $M$ known normalized states
with given prior probabilities
\cite{Kim-Miy-Ima-2009,Nui-Kim-Miy-2010,Bae-2012}.
Here, we consider the following scenario:
One party (Charlie) randomly chooses one of the $M$ states
$\cket{\rho_1^\N},\ldots,\cket{\rho_M^\N} \in \StN_A$
with prior probabilities $\xi_1,\ldots,\xi_M$.
Such a process is expressed by the state preparation $\rho \in \Prep_A$ depicted by Eq.~\eqref{eq:st}.
Since she knows which state she has, we can interpret that she has the following state
\begin{alignat}{1}
 \begin{overpic}{figures/st_charlie.pdf}
  \put(56,26){\footnotesize\eqref{eq:classical_cup}}
 \end{overpic}
 ~\raisebox{.5em}{.}
 \label{eq:st_charlie}
\end{alignat}
Indeed, by performing a measurement $\{ \cra{m} \}_{m \in \mI_M}$ on the classical system $C$,
she can always determine which state she has.
Charlie sends the state to the other party (Alice).
Alice knows the possible states $\cket{\rho_1^\N},\ldots,\cket{\rho_M^\N}$ and
their prior probabilities but does not know which state Charlie sent.
We can interpret that Alice gets the following normalized state
\begin{alignat}{1}
 \begin{overpic}{figures/st_alice.pdf}
  \put(57,35){\footnotesize\eqref{eq:classical_chi_gdis}}
 \end{overpic}
 ~\raisebox{.5em}{.}
 \label{eq:st_alice}
\end{alignat}
What Alice has to do is to perform a measurement that will correctly discriminate
between the states $\cket{\rho_1^\N},\ldots,\cket{\rho_M^\N}$ with high probability.
Alice performs a measurement 
$e \coloneqq \sum_{m=1}^M \cket{m} \circ \cra{e_m} \in \Meas_A$ with $\cra{e_m} \in \Eff_A$
to discriminate between the states as accurately as possible.
After that, Charlie and Alice check whether Alice correctly determines the state.
In this paper, we apply the strategy that maximizes the average success probability.
This probability is given as a function of the measurement $e$, denoted by $\PS(e)$,
which is expressed by
\begin{alignat}{1}
 \InsertPDF{figures/PS.pdf} ~\raisebox{.5em}{.}
 \label{eq:PS}
\end{alignat}
Note that $\PS(e)$ is also expressed by
\begin{alignat}{2}
 \PS(e) &= \sum_{m=1}^M \craket{e_m|\rho_m} &&=
 \sum_{m=1}^M \cra{m} \circ e \circ \rho \circ \cket{m}.
\end{alignat}
A measurement that maximizes the average success probability is called
a \termdef{minimum-error measurement}.
The problem of finding a minimum-error measurement is formulated as
\begin{alignat}{1}
 \begin{array}{ll}
  \mbox{maximize} & \PS(e) \\
  \mbox{subject~to} & e \in \Meas_A \\
 \end{array} \label{eq:main}
\end{alignat}
with variable $e$.
If we want to optimize over a restricted class of measurements,
denoted by $\Measgen_A$ with $\Measgen_A \subseteq \Meas_A$,
then we consider the following problem:
\begin{alignat}{1}
 \begin{array}{ll}
  \mbox{maximize} & \PS(e) \\
  \mbox{subject~to} & e \in \Measgen_A. \\
 \end{array} \label{eq:main_gen}
\end{alignat}

\subsection{Group action}

We use group theory to represent the symmetric properties of a state preparation.
Let $\mG$ be a group and $1 \in \mG$ be its identity element.
Let $\Aut(\mZ)$ be the group of all automorphisms of an object $\mZ$.
A map $\psi: \mG \ni g \mapsto \psi_g \in \Aut(\mZ)$ is called a (right) \termdef{group action}
of $\mG$ on $\mZ$
if $\psi_{gh}(z) = \psi_h[\psi_g(z)]$ $~(\forall g,h \in \mG)$ and $\psi_1(z) = z$ hold
for any $z \in \mZ$.
In the following, we give two examples of group actions.

The first example is a group action, $\tau:\mG \to \Aut(\mI_M)$, of $\mG$ on $\mI_M$.
$\Aut(\mI_M)$ is the group of all permutations of $\mI_M$.
We will identify $\mI_M$ with $\{ \cra{m} \in \Eff_C \}_{m \in \mI_M}$;
then, we have
\begin{alignat}{1}
 \InsertPDF{figures/sym_pi_id.pdf} ~\raisebox{.5em}{.}
 \label{eq:sym_pi_id}
\end{alignat}
Also, we have that, for any $g, h \in \mG$,
\begin{alignat}{1}
 \InsertPDF{figures/sym_pi_right.pdf} ~\raisebox{1.2em}{,}
 \label{eq:sym_pi_right}
\end{alignat}
which means $\tau_g \circ \tau_h \eqlocal \tau_{gh}$.
Thus, from Eq.~\eqref{eq:fca_eq} with $A = C$, we have
\begin{alignat}{1}
 \InsertPDF{figures/sym_pi.pdf} ~\raisebox{.5em}{.}
 \label{eq:sym_pi}
\end{alignat}
Clearly, for any $g \in \mG$, $\tau_g$ is deterministic and
$\tau_{g^{-1}}$ is the inverse of $\tau_g$.
Note that we also have%
\footnote{
We can also identify $\mI_M$ with $\StNP_C = \{ \cket{m} \}_{m \in \mI_M}$;
in this case, Eq.~\eqref{eq:sym_pi_left} implies that
$\tau$ can be interpreted as a left group action of $\mG$ on $\mI_M$.
A map $\psi: \mG \ni g \mapsto \psi_g \in \Aut(\mZ)$ is called
a \termdef{left group action} of $\mG$ on $\mZ$
if $\psi_{gh}(z) = \psi_g[\psi_h(z)]$ $~(\forall g,h \in \mG)$ and $\psi_1(z) = z$ hold
for any $z \in \mZ$.}
\begin{alignat}{1}
 \InsertPDF{figures/sym_pi_left.pdf} ~\raisebox{1.2em}{.}
 \label{eq:sym_pi_left}
\end{alignat}
We can easily verify
\begin{alignat}{1}
 \InsertPDF{figures/sym_pig.pdf} ~\raisebox{.5em}{.}
 \label{eq:sym_pig}
\end{alignat}
Indeed, from Eq.~\eqref{eq:classical_cup}, we have that,
for any $m,n \in \mI_M$,
\begin{alignat}{1}
 \InsertPDF{figures/sym_pig_proof.pdf} ~\raisebox{.5em}{.}
 \label{eq:sym_pig_proof}
\end{alignat}
Since any state in $C \otimes C$ is in the form
$\sum_{m=1}^M \sum_{n=1}^M c_{m,n} \cket{m} \otimes \cket{n}$
with $c_{m,n} \in \Real_+$,
Eq.~\eqref{eq:sym_pig_proof} yields Eq.~\eqref{eq:sym_pig}.
Similarly, we have
\begin{alignat}{1}
 \InsertPDF{figures/sym_pig2.pdf} ~\raisebox{.5em}{.}
 \label{eq:sym_pig2}
\end{alignat}

The second example is a group action, $\ol{\pi}:\mG \to \Aut(\Measgen_A)$,
of $\mG$ on $\Measgen_A$, where $\Measgen_A$ is a convex subset of $\Meas_A$.
Each automorphism of $\Measgen_A$ is in $\Vec_{A \to A}$,
i.e., each element of $\Aut(\Measgen_A)$ is linear.
We will say that $\Measgen_A \subseteq \Meas_A$ is
\termdef{symmetric under permutations of the measurement outcomes}
if, for any $h \in \Aut(\mI_M)$ and $e \in \Measgen_A$,
$h \circ e \in \Measgen_A$ holds.
This means that a process that first performs the measurement $e \in \Measgen_A$ and then
makes permutations among the measurement results is also in $\Measgen_A$.
$\ol{\pi}$ satisfies
\begin{alignat}{1}
 \InsertPDF{figures/sym_tau_id.pdf}
 \label{eq:sym_tau_id}
\end{alignat}
and
\begin{alignat}{1}
 \InsertPDF{figures/sym_tau.pdf}
 \label{eq:sym_tau}
\end{alignat}
for any $g,h \in \mG$.
Clearly, $\ol{\pi}_{g^{-1}}$ is the inverse of $\ol{\pi}_g$.
Each $\ol{f} \in \Aut(\Measgen_A)$ is reversible.

An example of $\Measgen_A$ is $\Meas_A$ itself; $\Meas_A$ is obviously
symmetric under permutations of the measurement outcomes.
It is also easily seen that the followings are equivalent:
\begin{enumerate}
 \item $\ol{f} \in \Aut(\Meas_A)$.
 \item $\ol{f} \in \Vec_{A \to A}$ is reversible and satisfies
       $e \circ \ol{f} \in \Meas_A$ for any $e \in \Meas_A$.
 \item $\ol{f} \in \Vec_{A \to A}$ is reversible, deterministic, and positive for effects,
       where we will call $\ol{g} \in \Vec_{A \to B}$ \termdef{positive for effects}
       if $\cra{e} \circ \ol{g} \in \Eff_A$ holds for any $\cra{e} \in \Eff_B$.
\end{enumerate}
Other examples of $\Measgen_A$ will be shown in Subsec.~\ref{subsec:opt_restricted}.
Note that $\Aut(\Measgen_A) \subseteq \Aut(\Meas_A)$ does not hold in general;
an example will be given in the example of quantum theory stated
in Subsec.~\ref{subsec:opt_restricted}.

\subsection{Symmetric properties}

We consider a set $(\mG, \tau, \ol{\pi})$, where $\tau:\mG \to \Aut(\mI_M)$ and
$\ol{\pi}:\mG \to \Aut(\Measgen_A)$ are group actions with a group $\mG$.
We will say that
a state preparation $\rho \coloneqq \sum_{n=1}^M \cket{\rho_n} \circ \cra{n} \in \Prep_A$
is \termdef{$(\mG,\tau,\ol{\pi})$-covariant} (or simply \termdef{$\mG$-covariant}) if
\begin{alignat}{1}
 \InsertPDF{figures/sym_rho.pdf}
 \label{eq:sym_rho}
\end{alignat}
holds for any $g \in \mG$.
We will also say that
a measurement $e^\sym \coloneqq \sum_{n=1}^M \cket{n} \circ \cra{e^\sym_n} \in \Measgen_A$ is
\termdef{$(\mG,\tau,\ol{\pi})$-covariant} if
\begin{alignat}{1}
 \InsertPDF{figures/sym_e.pdf}
 \label{eq:sym_e}
\end{alignat}
holds for any $g \in \mG$.
Equation~\eqref{eq:sym_rho} is the same as
\begin{alignat}{1}
 \InsertPDF{figures/sym_rho2.pdf}
 \label{eq:sym_rho2}
\end{alignat}
since
\begin{alignat}{1}
 \begin{overpic}{figures/sym_rho2_proof.pdf}
  \put(13,30){\footnotesize\eqref{eq:st}}
  \put(34,30){\footnotesize\eqref{eq:classical_cup_st_e2}}
 \end{overpic}
 \label{eq:sym_rho2_proof}
\end{alignat}
and
\begin{alignat}{1}
 \begin{overpic}{figures/sym_rho2_proof2.pdf}
  \put(21,37){\footnotesize\eqref{eq:st}}
  \put(52,44){\footnotesize\eqref{eq:classical_cup_st_e2}}
  \put(52,37){\footnotesize\eqref{eq:sym_pig2}}
 \end{overpic}
 ~\raisebox{1.0em}{.}
 \label{eq:sym_rho2_proof2}
\end{alignat}
Note that Eq.~\eqref{eq:sym_rho} is also the same as
\begin{alignat}{1}
 \InsertPDF{figures/sym_rho0.pdf} ~\raisebox{.5em}{.}
 \label{eq:sym_rho0}
\end{alignat}
Similar results hold for $e$; for example, Eq.~\eqref{eq:sym_e} is the same as
\begin{alignat}{1}
 \InsertPDF{figures/sym_e2.pdf}
 \label{eq:sym_e2}
\end{alignat}
since
\begin{alignat}{1}
 \begin{overpic}{figures/sym_e2_proof.pdf}
  \put(13,30){\footnotesize\eqref{eq:meas}}
  \put(34,30){\footnotesize\eqref{eq:classical_cup_st_e}}
 \end{overpic}
 \label{eq:sym_e2_proof}
\end{alignat}
and
\begin{alignat}{1}
 \begin{overpic}{figures/sym_e2_proof2.pdf}
  \put(22,33){\footnotesize\eqref{eq:meas}}
  \put(53,33){\footnotesize\eqref{eq:classical_cup_st_e}}
 \end{overpic}
 ~\raisebox{1.0em}{.}
 \label{eq:sym_e2_proof2}
\end{alignat}

\begin{thm}{}{Symmetry}
 Let $\Measgen_A$ be a convex subset of $\Meas_A$ that is
 symmetric under permutations of the measurement outcomes.
 Let $\tau:\mG \to \Aut(\mI_M)$ and $\ol{\pi}:\mG \to \Aut(\Measgen_A)$
 be group actions with a group $\mG$.
 If a state preparation $\rho \in \Prep_A$ is $(\mG,\tau,\ol{\pi})$-covariant,
 then, for any measurement $e \in \Measgen_A$,
 there exists a $(\mG,\tau,\ol{\pi})$-covariant measurement $e^\sym \in \Measgen_A$ satisfying
 $\PS(e^\sym) = \PS(e)$.
\end{thm}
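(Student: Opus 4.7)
The plan is to apply the standard group-averaging (twirling) construction. Starting from an arbitrary $e \in \Measgen_A$, define
\begin{alignat}{1}
 e^\sym \coloneqq \frac{1}{|\mG|}\sum_{g \in \mG} \tau_{g^{-1}} \circ \ol{\pi}_g(e),
\end{alignat}
where $\ol{\pi}_g(e) \in \Measgen_A$ is the image of $e$ under $\ol{\pi}_g \in \Aut(\Measgen_A)$ and $\tau_{g^{-1}}$ is viewed as a process $C \to C$ acting on the outcome label. When $\mG$ is infinite, the normalized sum is replaced by an invariant (Haar) average, assuming enough compactness; this does not affect the structure of the argument. Three claims must then be verified: $e^\sym \in \Measgen_A$, $e^\sym$ is $(\mG,\tau,\ol{\pi})$-covariant, and $\PS(e^\sym) = \PS(e)$.

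Membership and covariance are essentially formal. For membership, $\ol{\pi}_g(e) \in \Measgen_A$ because $\ol{\pi}_g$ is an automorphism of $\Measgen_A$; then $\tau_{g^{-1}} \circ \ol{\pi}_g(e) \in \Measgen_A$ by the hypothesis that $\Measgen_A$ is symmetric under permutations of the measurement outcomes; and convexity of $\Measgen_A$ finally yields $e^\sym \in \Measgen_A$. For covariance, a substitution of the definition of $e^\sym$ into~\eqref{eq:sym_e} followed by a reindexing of the summation variable by translation by $h$ (using the right-action identities $\tau_a \circ \tau_b = \tau_{ab}$ from~\eqref{eq:sym_pi_right}--\eqref{eq:sym_pi} and the analogous $\ol{\pi}_a \circ \ol{\pi}_b = \ol{\pi}_{ab}$ at the diagrammatic level from~\eqref{eq:sym_tau}) produces the desired fixed-point equation for every $h \in \mG$.

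The key step is showing $\PS(e^\sym) = \PS(e)$. Since the diagram~\eqref{eq:PS} depends linearly on $e$, it is enough to prove $\PS(\tau_{g^{-1}} \circ \ol{\pi}_g(e)) = \PS(e)$ for each individual $g$. Substituting into~\eqref{eq:PS}, one uses the $\rho$-covariance identity~\eqref{eq:sym_rho0} to transport $\ol{\pi}_g$ across the interface between $e$ and $\rho$, where it becomes a $\tau_g$ acting on the classical wire that carries the outcome index; this $\tau_g$ then cancels with the $\tau_{g^{-1}}$ produced by the twirl (using that $\tau_{g^{-1}}$ is the inverse of $\tau_g$), and the diagram collapses to the one defining $\PS(e)$. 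This diagrammatic manipulation is the main obstacle: although conceptually just pushing a transformation from one side of $\rho$ to the other, it requires careful tracking of the orientation of $\tau_g$ on the classical wires and of how $\ol{\pi}_g$ is consumed on the quantum side. A secondary concern, the legitimacy of the group average when $\mG$ is not finite, is a standard matter that can be handled by a compactness assumption or an invariant-mean argument and does not affect the logic of the proof.
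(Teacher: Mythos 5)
Your proposal is correct and is essentially the paper's own proof: the paper defines $e^\sym$ by exactly this group average (Eq.~\eqref{eq:sym_esym}), gets membership from convexity, closure under outcome permutations, and the automorphism property, gets covariance from the reindexing $\{hg : h \in \mG\} = \mG$, and gets $\PS(e^\sym)=\PS(e)$ from the covariance of $\rho$ together with linearity. The only divergence is your remark about Haar averaging for infinite $\mG$, which the paper sidesteps by implicitly working with a finite group.
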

\begin{proof}
 Let
 \begin{alignat}{1}
  \InsertPDF{figures/sym_esym.pdf} ~\raisebox{.5em}{.}
  \label{eq:sym_esym}
 \end{alignat}
 From the definition of $\Measgen_A$, we can easily verify $e^\sym \in \Measgen_A$.
 Equation~\eqref{eq:sym_esym} yields
 \begin{alignat}{1}
  \begin{overpic}{figures/sym_esym2.pdf}
   \put(29,34){\footnotesize\eqref{eq:sym_esym}}
   \put(29,28){\footnotesize\eqref{eq:sym_pig}}
  \end{overpic}
  ~\raisebox{.5em}{,}
  \label{eq:sym_esym2}
 \end{alignat}
 and thus
 \begin{alignat}{1}
  \begin{overpic}{figures/sym_esym_tau.pdf}
   \put(25,61){\footnotesize\eqref{eq:sym_esym2}}
   \put(25,23){\footnotesize\eqref{eq:sym_pi}}
   \put(25,19){\footnotesize\eqref{eq:sym_tau}}
   \put(73,19){\footnotesize\eqref{eq:sym_esym2}}
  \end{overpic}
 \end{alignat}
 holds for any $g \in \mG$, where the last equality follows from
 $\{ h : h \in \mG \} = \mG = \{ hg : h \in \mG \}$.
 Thus, $e^\sym$ is $(\mG,\tau,\ol{\pi})$-covariant.
 Also, we obtain
 \begin{alignat}{1}
  \begin{overpic}{figures/sym_PS.pdf}
   \put(20.5,27){\footnotesize\eqref{eq:sym_esym2}}
   \put(71.5,27){\footnotesize\eqref{eq:sym_rho}}
  \end{overpic}
  ~\raisebox{.5em}{,}
 \end{alignat}
 which gives $\PS(e^\sym) = \PS(e)$.
\end{proof}

Theorem~\ref{thm:Symmetry} immediately yields the following corollary.
\begin{cor}{}{Symmetry}
 Let $\Measgen_A$ be a convex subset of $\Meas_A$ that is
 symmetric under permutations of the measurement outcomes.
 Let $\tau:\mG \to \Aut(\mI_M)$ and $\ol{\pi}:\mG \to \Aut(\Measgen_A)$
 be group actions with a group $\mG$.
 If a state preparation $\rho \in \Prep_A$ is $(\mG,\tau,\ol{\pi})$-covariant,
 then there exists a $(\mG,\tau,\ol{\pi})$-covariant measurement that is optimal
 for Problem~\eqref{eq:main_gen}.
\end{cor}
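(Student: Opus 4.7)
The plan is to derive this as a one-line consequence of Theorem~\ref{thm:Symmetry}. Since that theorem produces, for any $e \in \Measgen_A$, a $(\mG,\tau,\ol{\pi})$-covariant $e^\sym \in \Measgen_A$ with $\PS(e^\sym) = \PS(e)$, it suffices to first exhibit an optimizer of Problem~\eqref{eq:main_gen} and then symmetrize it. No new diagrammatic manipulation is required beyond what has already been done for the theorem.

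Concretely, I would first pick a measurement $e^\opt \in \Measgen_A$ that attains the supremum of $\PS(e)$ over $e \in \Measgen_A$. Applying Theorem~\ref{thm:Symmetry} with $e \coloneqq e^\opt$ then yields a $(\mG,\tau,\ol{\pi})$-covariant measurement $e^\sym \in \Measgen_A$ satisfying $\PS(e^\sym) = \PS(e^\opt)$, and since $e^\opt$ is optimal, so is $e^\sym$. Writing out this chain explicitly is the whole proof.

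The only subtlety, and the main potential obstacle, is the existence of $e^\opt$: if $\Measgen_A$ failed to be compact in a suitable topology, only approximate optima would exist and the conclusion would need to be weakened to a $(\mG,\tau,\ol{\pi})$-covariant maximizing sequence. In the framework set up in this paper, however, $\Proc_{A \to B}$ has already been assumed to be closed, $\PS$ is a continuous linear functional of $e$ by Eq.~\eqref{eq:PS}, and $\Measgen_A \subseteq \Meas_A = \ProcD_{A \to C}$ is a bounded convex set (its elements being deterministic processes), so under the implicit finite-dimensionality the maximum is attained. Thus the corollary follows immediately by combining this attainment with Theorem~\ref{thm:Symmetry}.
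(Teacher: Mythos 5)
Your proposal is correct and matches the paper's treatment: the paper derives the corollary as an immediate consequence of Theorem~\ref{thm:Symmetry}, exactly by symmetrizing an optimal measurement. Your additional remark on the attainment of the optimum is a reasonable piece of due diligence that the paper leaves implicit, but it does not change the argument.
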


\begin{ex}[quantum theory]
 We consider the special case of $\Measgen_A = \Meas_A$.
 Any $\ol{f} \in \Aut(\Meas_A)$ is expressed in the form
 \begin{alignat}{1}
  \ol{f} \circ \cket{\rho} = U_{\ol{f}} \cdot \cket{\rho} \cdot U_{\ol{f}}^\dagger,
  \label{eq:quant_f_unitary}
 \end{alignat}
 where $U_{\ol{f}}$ is a unitary or anti-unitary matrix of order $\NA$
 and $^\dagger$ denotes the conjugate transpose%
 \footnote{Easy proof: One can easily verify that any $\ol{f} \in \Aut(\Meas_A)$
 maps pure effects to pure effects and thus
 maps normalized pure states to normalized pure states.
 Therefore, according to Lemma~4 of Ref.~\cite{Fri-Li-Poo-Sze-2011},
 $\ol{f}$ is expressed in the form of Eq.~\eqref{eq:quant_f_unitary}
 or in the form $\ol{f} \circ \cket{\rho} = [\Tr~\cket{\rho}] \cdot \cket{\phi}$
 with a fixed $\cket{\phi} \in \StNP_A$.
 The latter case is ruled out since $\ol{f}$ is reversible.}.
 Thus, $\ol{\pi}_g \in \Aut(\Meas_A)$ must be in the form
 \begin{alignat}{1}
  \ol{\pi}_g \circ \cket{\rho_m} &= U_g \cdot \cket{\rho_m} \cdot U_g^\dagger,
 \end{alignat}
 where $U_g$ is a unitary or anti-unitary matrix of order $\NA$.
 Since $\ol{\pi}:\mG \to \Aut(\Meas_A)$ is a group action,
 $U_1 \cdot \cket{\rho_m} \cdot U_1^\dagger = \ident_\NA$ and
 $U_g [ U_h \cdot \cket{\rho_m} \cdot U_h^\dagger] U_g^\dagger
 = U_{gh} \cdot \cket{\rho_m} \cdot U_{gh}^\dagger$ must hold for any $g,h \in \mG$.
 This type of symmetry has been discussed in Ref.~\cite{Nak-Usu-2013-group}.
\end{ex}

\subsection{Optimization over a restricted class of measurements} \label{subsec:opt_restricted}

Theorem~\ref{thm:Symmetry} can be utilized to
optimize over a restricted class of measurements,
as we will see in this subsection.
We here consider state discrimination problems in a bipartite system.

Let us introduce a three-party: Alice, Bob, and Charlie.
We will consider the following scenario.
Charlie randomly chooses one of the $M$ states
$\rho_1^\N,\ldots,\rho_M^\N \in \StN_\AB$
with prior probabilities $\xi_1,\ldots,\xi_M$,
which is expressed by the state preparation
$\rho \coloneqq \sum_{n=1}^M \cket{\rho_n} \circ \cra{n} \in \Prep_\AB$
with $\cket{\rho_n} \coloneqq \xi_n \cket{\rho_n^\N}$ and $\cra{n} \in \Eff_C$.
$A$ and $B$ respectively refer to the systems of Alice and Bob.
Also, $C$ is a classical system with $\NC = M$.
Charlie sends the state to Alice and Bob,
who perform a measurement 
$e \coloneqq \sum_{m=1}^M \cket{m} \circ \cra{e_m} \in \Meas_\AB$ with $\cra{e_m} \in \Eff_\AB$.
Then, Alice, Bob, and Charlie check whether Alice and Bob correctly determine the state.
Similarly to Eq.~\eqref{eq:PS}, the average success probability of a measurement
$e$ is depicted as
\begin{alignat}{1}
 \InsertPDF{figures/seq_PS.pdf}
 ~\raisebox{1.5em}{.}
 \label{eq:seq_PS}
\end{alignat}

If Alice and Bob can perform any measurement,
this problem is expressed by Eq.~\eqref{eq:main} with $\AB$ instead of $A$.
In this subsection, assume that they can only perform restricted measurements.
We consider four classes of measurements: sequential, LOCC, separable, and PT.
Let $D$ and $D'$ be classical systems, which can be infinite-dimensional.

A measurement $e \in \Meas_\AB$ is referred to as \termdef{sequential} if
it can be expressed in the form
\begin{alignat}{1}
 \InsertPDF{figures/meas_seq.pdf}
 \label{eq:meas_seq}
\end{alignat}
with $a \in \Meas_A^D$ and $b \in \Meas_{D \otimes B}$.
$b \in \Meas_{D \otimes B}$ can be interpreted as a classical controlled measurement.
Indeed, let us define $b^i$ as a measurement of $B$ depicted by
\begin{alignat}{1}
 \InsertPDF{figures/measBm.pdf}~\raisebox{2.0em}{;}
 \label{eq:measBm}
\end{alignat}
then, we have
\begin{alignat}{1}
 \begin{overpic}{figures/measB.pdf}
  \put(17,22){\footnotesize\eqref{eq:classical_id}}
  \put(58,22){\footnotesize\eqref{eq:measBm}}
 \end{overpic}
 ~\raisebox{.5em}{.}
 \label{eq:measB}
\end{alignat}
$b$ can be interpreted as a process which performs a measurement $b^i \in \Meas_B$
when the state $\cket{i}$ is inputted to the system $D$.

A measurement $e \in \Meas_\AB$ is referred to as \termdef{LOCC}
if it can be expressed in the form
\begin{alignat}{1}
 \InsertPDF{figures/meas_LOCC.pdf}
 \label{eq:meas_LOCC}
\end{alignat}
with $a_1 \in \ProcD_{A \to A_1 \otimes D}$,
$a_k \in \ProcD_{A_{k-1} \otimes D \to A_k \otimes D}$ $~(k \in \{ 2,\ldots,n-1 \})$,
$a_n \in \Meas_{A_{n-1} \otimes D}^D$,
$b_1 \in \ProcD_{D \otimes B \to D \otimes B_1}$,
$b_l \in \ProcD_{D \otimes B_{l-1} \to D \otimes B_l}$ $~(l \in \{ 2,\ldots,n-1 \})$,
and $b_n \in \Meas_{D \otimes B_{n-1}}$,
where $n$ is some natural number.

A measurement $e \in \Meas_\AB$ is referred to as \termdef{separable} if
it can be expressed in the form
\begin{alignat}{1}
 \InsertPDF{figures/meas_separable.pdf} ~\raisebox{.5em}{,}
 \label{eq:meas_separable}
\end{alignat}
where $a \in \Proc_{A \to D}$, $b \in \Proc_{B \to D'}$, and
$c \in \Proc_{D \otimes D' \to C}$.
Any $c \in \Proc_{D \otimes D' \to C}$ can be expressed by
\begin{alignat}{1}
 \begin{overpic}{figures/meas_separable_c.pdf}
  \put(13,28){\footnotesize\eqref{eq:classical_id}}
 \end{overpic}
 ~\raisebox{1.5em}{,}
 \label{eq:meas_separable_c}
\end{alignat}
where $c_{i,j,m} \in \Real_+$ is the scalar enclosed by the auxiliary box.
Using Eq.~\eqref{eq:meas_separable_c}, we can easily see that
a necessary and sufficient condition for $e \in \Meas_\AB$ to be separable
is that $\cra{m} \circ e$ is separable for each $m \in \mI_M$.

A measurement $e \in \Meas_\AB$ is referred to as \termdef{PT} if
\begin{alignat}{1}
 \InsertPDF{figures/meas_PT.pdf}
 \label{eq:meas_PT}
\end{alignat}
holds for any system $A'$ and any deterministic extended process $\ol{f} \in \Vec_{A' \to A}$
that is positive for effects.

Let $\MeasseqAB$ be the set of all sequential measurements from $A$ to $B$.
Also, let $\MeasLOCC_\AB$, $\MeasSEP_\AB$, and $\MeasPT_\AB$ be, respectively,
a set of all LOCC, separable, and PT measurements of $\AB$.

\begin{proposition}{}{Meas}
 \begin{alignat}{3}
  \MeasseqAB &\subseteq \MeasLOCC_\AB &&\subseteq \MeasSEP_\AB \nonumber \\
  & &&\subseteq \MeasPT_\AB && \subseteq \Meas_\AB.
 \end{alignat}
\end{proposition}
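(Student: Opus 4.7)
My plan is to verify the four inclusions in order, from the most restrictive class outward. Two of them are essentially definitional, one follows by a routine diagrammatic computation, and the remaining one (LOCC to separable) requires a short induction on the number of rounds.

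The endpoints are easy. For $\MeasseqAB \subseteq \MeasLOCC_\AB$, a sequential measurement $(a,b)$ with $a \in \Meas_A^D$ and $b \in \Meas_{D \otimes B}$ is precisely the $n=1$ specialization of Eq.~\eqref{eq:meas_LOCC}, with Alice's intermediate private wires taken trivial. For $\MeasPT_\AB \subseteq \Meas_\AB$, I would instantiate Eq.~\eqref{eq:meas_PT} with $A' = A$ and $\ol{f} = \id_A$; since $\id_A$ is deterministic and positive for effects, the resulting diagram, which equals $e$ itself, lies in $\Meas_\AB$.

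For $\MeasLOCC_\AB \subseteq \MeasSEP_\AB$, I would use the criterion stated just after Eq.~\eqref{eq:meas_separable_c}: it is enough to show that $\cra{m} \circ e$ is a separable effect for every outcome $m$. Inserting the classical identity resolution from Eq.~\eqref{eq:classical_id} on every $D$-wire in Eq.~\eqref{eq:meas_LOCC} splits $e$ into a sum indexed by classical trajectories $(i_1,\ldots,i_n)$; along each trajectory, Alice's operations compose into a process on $A$ alone (ending in a fixed classical outcome) and Bob's into a process on $B$ alone, so $\cra{m} \circ e$ becomes a nonnegative combination of product effects on $A \otimes B$. An induction on $n$, in which each additional round merely extends the branch index by one classical letter, makes this rigorous, and then the criterion delivers a presentation of $e$ in the form of Eq.~\eqref{eq:meas_separable}.

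For $\MeasSEP_\AB \subseteq \MeasPT_\AB$, let $e = c \circ (a \otimes b)$ as in Eq.~\eqref{eq:meas_separable}, and let $\ol{f} \in \Vec_{A' \to A}$ be deterministic and positive for effects. The left-hand side of Eq.~\eqref{eq:meas_PT} equals $c \circ ((a \circ \ol{f}) \otimes b)$. Two checks remain. First, decomposing $c$ via Eq.~\eqref{eq:meas_separable_c}, every outcome effect becomes a nonnegative combination of products $(\cra{i} \circ a \circ \ol{f}) \otimes (\cra{j} \circ b)$; since $\cra{i} \circ a \in \Eff_A$ and $\ol{f}$ is positive for effects, $\cra{i} \circ a \circ \ol{f} \in \Eff_{A'}$, so each such product lies in $\Eff_{A' \otimes B}$. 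Second, summing over outcomes yields $(\cra{\gdis_A} \circ \ol{f}) \otimes \cra{\gdis_B}$, which by determinism of $\ol{f}$ and Eq.~\eqref{eq:discard_AB} equals $\cra{\gdis_{A' \otimes B}}$, confirming that the diagram in Eq.~\eqref{eq:meas_PT} is a measurement. The main obstacle is the second inclusion: the LOCC scheme's arbitrary number of rounds makes the bookkeeping delicate, but the resolution-of-identity trick on every classical wire reduces the argument to a purely formal diagrammatic manipulation.
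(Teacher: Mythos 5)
Your proof is correct and takes essentially the same route as the paper's: the endpoint inclusions are treated as definitional, the sequential/LOCC-to-separable step is handled by inserting the classical resolution of identity $\id_D = \sum_i \cket{i}\cra{i}$ on the communication wires (the paper's Eq.~\eqref{eq:meas_separable_seq}, which you merely spell out for general LOCC by induction on rounds where the paper says ``in the same way''), and the separable-to-PT step uses exactly the paper's argument that positivity for effects of $\ol{f}$ makes each outcome component $(\cra{i} \circ a \circ \ol{f}) \otimes (\cra{j} \circ b)$ an effect while determinism of $\ol{f}$ preserves normalization. No gaps.
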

\begin{proof}
 $\MeasseqAB \subseteq \MeasLOCC_\AB$ and
 $\MeasPT_\AB \subseteq \Meas_\AB$ obviously hold.
 $\Measseq_\AB \subseteq \MeasSEP_\AB$ follows from
 \begin{alignat}{1}
  \begin{overpic}{figures/meas_separable_seq.pdf}
   \put(23,49){\footnotesize\eqref{eq:classical_id}}
   \put(57,49){\footnotesize\eqref{eq:classical_cup_st_e2}}
  \end{overpic}
  ~\raisebox{.5em}{,}
  \label{eq:meas_separable_seq}
 \end{alignat}
 where $c$ and $b'$ are the processes enclosed by the upper and lower auxiliary boxes, respectively.
 $\MeasLOCC_\AB \subseteq \MeasSEP_\AB$ can be immediately proved in the same way.
 The proof is completed by showing $\MeasSEP_\AB \subseteq \MeasPT_\AB$.
 Let $e \in \MeasSEP_\AB$ be expressed in the form of Eq.~\eqref{eq:meas_separable}.
 Arbitrarily choose a deterministic extended process $\ol{f} \in \Vec_{A' \to A}$
 that is positive for effects.
 Since $e' \coloneqq e \circ (\ol{f} \otimes \id_B)$ is obviously deterministic,
 it remains to prove that $e'$ is a process.
 We have
 \begin{alignat}{1}
  \begin{overpic}{figures/meas_PT_separable.pdf}
   \put(21,23){\footnotesize\eqref{eq:meas_separable}}
   \put(50,23){\footnotesize\eqref{eq:meas_separable_c}}
  \end{overpic}
  ~\raisebox{.5em}{.}
  \label{eq:meas_PT_separable}
 \end{alignat}
 Since $\ol{f}$ is positive for effects,
 the extended effect enclosed by the auxiliary box is an effect,
 and thus $e'$ is a process.
 Therefore, $e$ is PT.
\end{proof}

\begin{proposition}{}{QuantPT}
 In quantum theory, $\MeasSEP_\AB = \MeasPT_\AB$ holds.
\end{proposition}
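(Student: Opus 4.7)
Proposition~\ref{pro:Meas} already gives $\MeasSEP_\AB \subseteq \MeasPT_\AB$, so only the reverse inclusion $\MeasPT_\AB \subseteq \MeasSEP_\AB$ must be proved. The plan is to recast the PT hypothesis in purely quantum-mechanical terms and then invoke a Horodecki-style separability criterion on the individual POVM elements.

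Fix $e \in \MeasPT_\AB$ and set $E_m \coloneqq \cra{m} \circ e$, each a positive semidefinite matrix on $\Complex^{\NA} \otimes \Complex^{\NB}$. By the characterization following Eq.~\eqref{eq:meas_separable_c}, it suffices to prove that every $E_m$ is separable in the usual tensor-sum sense. First I would translate the definitions: a deterministic extended process $\ol{f} \in \Vec_{A' \to A}$ corresponds in quantum theory to a Hermitian-preserving trace-preserving map from $\mS(\Complex^{N_{A'}})$ to $\mS(\Complex^{\NA})$, and the condition that $\ol{f}$ be positive for effects is equivalent to its Heisenberg dual $\ol{f}^*$ being a positive map. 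Conversely, every positive unital map $\Phi: \mS(\Complex^{\NA}) \to \mS(\Complex^{N_{A'}})$ arises as some $\ol{f}^*$. Since $e \circ (\ol{f} \otimes \id_B)$ has classical output, it is a process if and only if each $(\ol{f}^* \otimes \id)(E_m) \geq 0$. Thus the PT assumption becomes: for every $N_{A'}$, every positive unital map $\Phi: \mS(\Complex^{\NA}) \to \mS(\Complex^{N_{A'}})$, and every $m$, $(\Phi \otimes \id)(E_m) \geq 0$.

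I would then invoke the Horodecki separability criterion in its ``positive unital'' formulation: a positive semidefinite matrix $E$ on $\Complex^{\NA} \otimes \Complex^{\NB}$ is separable if and only if $(\Phi \otimes \id)(E) \geq 0$ for every positive unital map $\Phi$ on $\mS(\Complex^{\NA})$. The ``only if'' direction is immediate. For the ``if'' direction, given any positive map $\Phi_0$, set $\Phi_\lambda(X) \coloneqq \Phi_0(X) + \lambda (\Tr X)\, \sigma$ for a strictly positive $\sigma$ and $\lambda > 0$, so that $\Phi_\lambda(I) > 0$; then $\tilde\Phi_\lambda(X) \coloneqq \Phi_\lambda(I)^{-1/2}\, \Phi_\lambda(X)\, \Phi_\lambda(I)^{-1/2}$ is positive and unital. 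Congruence by the invertible matrix $\Phi_\lambda(I)^{-1/2} \otimes I$ preserves positive semidefiniteness, so the hypothesis $(\tilde\Phi_\lambda \otimes \id)(E) \geq 0$ forces $(\Phi_\lambda \otimes \id)(E) \geq 0$; sending $\lambda \to 0$ gives $(\Phi_0 \otimes \id)(E) \geq 0$, and the standard Horodecki theorem (with all positive maps) then delivers separability of $E$.

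The main obstacle is precisely this bridging step between ``positive trace-preserving'' maps, which is what the PT definition supplies, and ``arbitrary positive maps'' as required by the classical Horodecki criterion; the perturbation-and-conjugation trick above is what closes that gap. Once the unital criterion is in hand, applying it to each $E_m$ yields separability of every POVM element of $e$, and therefore $e \in \MeasSEP_\AB$, completing the proof.
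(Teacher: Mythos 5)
Your proposal is correct, and it reaches the conclusion by a route whose structure differs from the paper's even though both ultimately rest on the same Horodecki--Horodecki--Horodecki (1996) separability result. The paper argues by contradiction in the \emph{witness} picture: assuming some $\cra{e_m}$ is entangled, it takes a block-positive $\cket{\ol{v}}$ with $\craket{e_m|\ol{v}}<0$, normalizes it so that its marginal $[\cra{\gdis_A}\otimes\id_B]\circ\cket{\ol{v}}$ becomes the identity (first perturbing it to be full rank, then applying a reversible $g$), and converts the result into a deterministic, positive-for-effects $\ol{f}$ via the cup/cap (Choi--Jamio{\l}kowski) construction, so that $\cra{e_m}\circ(\ol{f}\otimes\id_B)$ fails to be an effect. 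You instead work directly in the \emph{map} picture: you translate the PT condition into positivity of $(\Phi\otimes\id)(\cra{e_m})$ for every positive \emph{unital} $\Phi$, remove the unitality restriction by the perturbation $\Phi_\lambda$ and congruence by $\Phi_\lambda(\ident)^{-1/2}$, and then invoke the positive-map form of the criterion. The two normalization steps are Jamio{\l}kowski duals of one another --- making the witness's marginal the identity is the same as making the associated map unital --- so the mathematical content coincides, but your version is direct rather than by contradiction and outsources the witness-to-map conversion to the textbook statement of the criterion, whereas the paper carries it out explicitly and diagrammatically, which fits the OPT framework it is embedded in. One step you should make fully explicit in a final write-up is the converse realizability claim: that \emph{every} positive trace-preserving (equivalently, unital in the Heisenberg dual) map is realized by some deterministic extended process in $\Vec_{A'\to A}$ that is positive for effects. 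This holds because the paper identifies $\Vec_{A'\to A}$ with the space of all Hermitian-preserving maps, but it is the hinge on which your translation of the PT condition turns, and without it the quantifier over $\ol{f}$ would give you only a subset of the positive unital maps.
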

\begin{proof}
 Since $\MeasSEP_\AB \subseteq \MeasPT_\AB$ holds,
 it suffices to show $\MeasSEP_\AB \supseteq \MeasPT_\AB$,
 i.e., for any $e \in \MeasPT_\AB$ and $m \in \mI_M$,
 $\cra{m} \circ e$ is separable.

 Assume, by contradiction, that $\cra{e_m} \coloneqq \cra{m} \circ e$ is
 entangled for some $e \in \MeasPT_\AB$ and $m \in \mI_M$.
 It has been shown in Ref.~\cite{Hor-Hor-Hor-1996} that
 there exists an extended state $\cket{\ol{v}} \in \Vec_\AB$
 such that $\craket{e_m|\ol{v}} < 0$ and
 $[\cra{b} \otimes \cra{b'}] \cket{\ol{v}} \ge 0$ hold for
 any $\cra{b} \in \Eff_A$ and $\cra{b'} \in \Eff_B$.
 One can easily verify that, for each $\cra{b} \in \Eff_A$,
 $\cket{v'} \coloneqq [\cra{b} \otimes \id_B] \circ \cket{\ol{v}} \in \St_B$ holds
 since $\craket{b'|v'} \ge 0$ holds for any $\cra{b'} \in \Eff_B$.
 Let $\cket{v_B} \coloneqq [\cra{\gdis_A} \otimes \id_B] \circ \cket{\ol{v}} \in \St_B$.
 Assume, without loss of generality, that
 $\cket{v_B}$ is full rank
 (if not, we can replace $\cket{\ol{v}}$ with
 $\cket{\ol{v}} + c \cket{\rho_0} \otimes \cket{\gdisT_B}$
 with $0 < c \in \Real_+$, $\cket{\rho_0} \in \StN_A$, and
 $\cket{\gdisT_B} \coloneqq \ident_\NB$, where $c$ is sufficiently small
 such that $\craket{e_m|\ol{v}} < 0$).
 It is easily seen that there exists a reversible process $g \in \Proc_{B \to B}$
 such that $g \circ \cket{v_B} = \cket{\gdisT_B}$.
 Let $g^{-1}$ be the inverse of $g$.
 It is well-known that, for each system $A$, there exist $\cket{\cup_A} \in \St_{A \otimes A}$
 and $\cra{\cap_A} \in \Eff_{A \otimes A}$ such that
 $[\cra{\cap_A} \otimes \id_A] \circ [\id_A \otimes \cket{\cup_A}] = \id_A$
 and $\cra{\cap_A} \circ [\cket{\gdisT_A} \otimes \id_A] = \cra{\gdis_A}$
 (see, e.g., \cite{Coe-Kis-2017}).
 Let
 \begin{alignat}{1}
  \InsertPDF{figures/meas_PT_quantum_f.pdf} ~\raisebox{.5em}{.}
  \label{eq:meas_PT_quantum_f}
 \end{alignat}
 One can easily verify
 \begin{alignat}{1}
  \InsertPDF{figures/meas_PT_quantum2.pdf} ~\raisebox{2.0em}{,}
  \label{eq:meas_PT_quantum2}
 \end{alignat}
 i.e., $\ol{f}$ is deterministic.
 Also, $\ol{f}$ is positive for effects since,
 for any $\cra{b} \in \Eff_A$,
 from $[\cra{b} \otimes \id_B] \circ \cket{\ol{v}} \in \St_B$,
 $\cra{b} \circ \ol{f} \in \Eff_B$ holds.
 Moreover, we have
 \begin{alignat}{1}
  \InsertPDF{figures/meas_PT_quantum.pdf} ~\raisebox{2.0em}{,}
  \label{eq:meas_PT_quantum}
 \end{alignat}
 which means that $\cra{e_m} \circ (\ol{f} \otimes \id_B)$ is not an effect.
 This contradicts $e \in \MeasPT_\AB$.
 Therefore, $\MeasSEP_\AB \supseteq \MeasPT_\AB$ holds.
\end{proof}

Obviously, $\MeasseqAB$, $\MeasLOCC_\AB$, $\MeasSEP_\AB$, and $\MeasPT_\AB$ are
convex subsets of $\Meas_\AB$ that are symmetric under permutations of the measurement outcomes.
Thus, the following corollary follows immediately from Theorem~\ref{thm:Symmetry}
with $\AB$ instead of $A$:
\begin{cor}{}{BiSymmetry}
 Let $\Measgen_\AB$ be $\MeasseqAB$, $\MeasLOCC_\AB$, $\MeasSEP_\AB$, or $\MeasPT_\AB$.
 Let $\tau:\mG \to \Aut(\mI_M)$ and $\ol{\pi}:\mG \to \Aut(\Measgen_A)$
 be group actions with a group $\mG$.
 If a state preparation $\rho \in \Prep_\AB$ is $(\mG,\tau,\ol{\pi})$-covariant,
 then, for any measurement $e \in \Measgen_\AB$,
 there exists a $(\mG,\tau,\ol{\pi})$-covariant measurement
 $e^\sym \in \Measgen_\AB$ that satisfies $\PS(e^\sym) = \PS(e)$.
\end{cor}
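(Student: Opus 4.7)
The plan is to derive the result directly from Theorem~\ref{thm:Symmetry} by substituting $\AB$ for $A$ and choosing $\Measgen_\AB$ to be one of the four restricted classes $\MeasseqAB$, $\MeasLOCC_\AB$, $\MeasSEP_\AB$, $\MeasPT_\AB$. The hypothesis of Theorem~\ref{thm:Symmetry} on $\Measgen$ is twofold: it must be a convex subset of $\Meas$, and it must be symmetric under permutations of the measurement outcomes. So the only work is to verify these two closure properties for each of the four classes; once that is done, the conclusion is a direct quotation of Theorem~\ref{thm:Symmetry}.

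For convexity, I would argue class by class. For $\MeasPT_\AB$ and $\MeasSEP_\AB$ the property is immediate from the definitions: given $e_1,e_2 \in \MeasSEP_\AB$ and $p \in \ScalarF$, the sum $pe_1 + (1-p)e_2$ decomposes componentwise into a sum of separable effects and hence is separable by the criterion following Eq.~\eqref{eq:meas_separable_c}; given $e_1,e_2 \in \MeasPT_\AB$, composing $pe_1 + (1-p)e_2$ with $\ol{f} \otimes \id_B$ for any deterministic $\ol{f} \in \Vec_{A' \to A}$ that is positive for effects yields a convex combination of processes, hence a process. For $\MeasseqAB$ and $\MeasLOCC_\AB$ the verification uses a standard trick: one enlarges the classical register so that the very first stage performs a coin flip with bias $p$ and then conditionally runs one of the two protocols, a construction expressible diagrammatically using the classical states $\cket{m}$ and the conditional-measurement interpretation of Eq.~\eqref{eq:measB}.

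For closure under permutations of outcomes, let $h \in \Aut(\mI_M)$ and $e \in \Measgen_\AB$. In every case, $h \circ e$ differs from $e$ only by a deterministic relabeling on the final classical output wire, and this relabeling can be absorbed into the last classical-output-producing step: into $b$ for $\MeasseqAB$, into $b_n$ for $\MeasLOCC_\AB$, and into $c$ (via Eq.~\eqref{eq:meas_separable_c}) for $\MeasSEP_\AB$. For $\MeasPT_\AB$ it is immediate, since for any eligible $\ol{f}$,
\begin{alignat}{1}
(h \circ e) \circ (\ol{f} \otimes \id_B) = h \circ (e \circ (\ol{f} \otimes \id_B))
\end{alignat}
is the composition of a process with a deterministic classical map, which is a process, and remains deterministic.

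With both closure properties in hand, the hypotheses of Theorem~\ref{thm:Symmetry} are met for $\Measgen_\AB$, and the corollary is a verbatim restatement of that theorem's conclusion. There is no substantive obstacle beyond the bookkeeping above; the symmetrized measurement $e^\sym$ produced in Eq.~\eqref{eq:sym_esym} automatically lies in whichever restricted class we have chosen, since it is built as a convex combination of $\ol{\pi}$-transformed and outcome-permuted copies of $e$, and we have just checked that each such operation preserves membership in $\Measgen_\AB$. The only step that would require more than a line of justification is the mild diagrammatic manipulation certifying that pre- and post-composition with $\ol{\pi}_g$ and $\tau_g^{-1}$ does not leave the class either; this follows from $\ol{\pi}_g \in \Aut(\Measgen_\AB)$ by definition and from the outcome-permutation closure for $\tau_g^{-1}$.
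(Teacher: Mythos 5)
Your proposal is correct and follows essentially the same route as the paper: the paper's proof simply asserts that the four classes are "obviously" convex and symmetric under permutations of the measurement outcomes and then invokes Theorem~\ref{thm:Symmetry} with $\AB$ in place of $A$. You merely spell out the closure verifications that the paper leaves implicit, and your sketches of those verifications are sound.
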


\begin{ex}[quantum theory]
 It is easily seen that each element of $\Aut(\Measgen_\AB)$
 with $\Measgen_\AB \in \{ \MeasseqAB, \MeasLOCC_\AB, \MeasSEP_\AB \}$
 maps normalized pure effects to normalized pure effects
 (where a pure effect is called normalized if its trace is one)
 and thus maps normalized pure states to normalized pure states.
 Let $\mP$ be the set of all normalized separable pure states of $\AB$.
 Also, let $\mP^*$ be the set of all normalized separable pure effects of $\AB$.
 For each $\cra{e} \in \mP^*$,
 there exists a sequential measurement $\{ \cra{e_m} \}_{m \in \mI_M}$ satisfying
 $\cra{e_1} = \cra{e}$.
 Thus, we can easily see that, for any $\ol{f} \in \Aut(\MeasseqAB)$,
 we have $\{ \cra{e} \circ \ol{f} : \cra{e} \in \mP^* \} = \mP^*$,
 i.e.,
 \begin{alignat}{1}
  \{ \ol{f} \circ \cket{\psi} : \cket{\psi} \in \mP \} = \mP.
  \label{eq:quant_f_P}
 \end{alignat}
 It follows, by the same argument, that Eq.~\eqref{eq:quant_f_P} also holds
 for any $\ol{f} \in \Aut(\MeasLOCC_\AB)$ and for any $\ol{f} \in \Aut(\MeasSEP_\AB)$.
 According to Theorem~3 of Ref.~\cite{Fri-Li-Poo-Sze-2011},
 $\ol{f} \in \Vec_{\AB \to \AB}$ satisfies Eq.~\eqref{eq:quant_f_P}
 if and only if $\ol{f}$ has one of the following forms:
 \begin{enumerate}
  \item $\ol{f} \circ \cket{\rho} = (U_1 \otimes U_2) \cdot \cket{\rho} \cdot (U_1 \otimes U_2)^\dagger$,
  \item $\ol{f} \circ \cket{\rho} = (U_1 \otimes U_2) \cdot
        [\cross_{A,B} \circ \cket{\rho}] \cdot (U_1 \otimes U_2)^\dagger$,
 \end{enumerate}
 where $\cket{\rho} \in \St_\AB$.
 $U_1$ and $U_2$ are, respectively, unitary or anti-unitary matrices of
 order $\NA$ and $\NB$.
 $\ol{f}$ can have the form of 2) only in the case of $A \cong B$.
 Note that $\cross_{A,B}$ satisfies $\cross_{A,B} \circ [\cket{\rho_1} \otimes \cket{\rho_2}]
 = \cket{\rho_2} \otimes \cket{\rho_1}$ for any $\cket{\rho_1} \in \St_A$ and $\cket{\rho_2} \in \St_B$.
 In each of these cases, we have:
 \begin{enumerate}
  \item $\cra{e} \circ \ol{f} = (U_1 \otimes U_2)^\dagger \cdot \cra{e} \cdot (U_1 \otimes U_2)$,
  \item $\cra{e} \circ \ol{f} = (U_2 \otimes U_1)^\dagger \cdot
        [\cra{e} \circ \cross_{B,A}] \cdot (U_2 \otimes U_1)$,
 \end{enumerate}
 where $\cra{e} \in \Eff_\AB$.

 One can easily see that any $\ol{f} \in \Aut(\MeasseqAB)$ must have the form of 1)
 unless $A \cong B \cong I$.
 (Indeed, if there exists $\ol{f} \in \Aut(\MeasseqAB)$ with the form of 2),
 then, $e \circ \ol{f} \in \Measseq_{B \to A}$ holds for any $e \in \MeasseqAB$.
 In this case, since $\{ e \circ \ol{f} : e \in \MeasseqAB \} = \MeasseqAB$ holds
 from $\ol{f} \in \Aut(\MeasseqAB)$,
 $\MeasseqAB \subseteq \Measseq_{B \to A}$ must hold.
 However, it is immediately seen that there exists $e \in \MeasseqAB$
 satisfying $e \not\in \Measseq_{B \to A}$.)
 This type of symmetry with respect to $\MeasseqAB$ has been discussed
 in Ref.~\cite{Nak-Kat-Usu-2018-seq_gen}.
 If $\Measgen_\AB \in \{ \MeasLOCC_\AB, \MeasSEP_\AB \}$ and
 $A \cong B$ hold, then $\ol{f} \in \Aut(\Measgen_\AB)$ can have the form of 2).
\end{ex}

\section{Conclusion}

The problem of discrimination of symmetric states in an OPT has been investigated
in diagrammatic terms.
It is well-known that, in quantum theory, if states have a certain symmetry, then
there exists a minimum-error measurement that has the same type of symmetry.
We showed in Theorem~\ref{thm:Symmetry} that this property is also valid in a more general OPT.
We also showed that this result can be utilized to optimize over a restricted class
of measurements.
Although we discuss only the minimum-error strategy to simplify the discussion,
this result can be easily applied to other various criteria,
such as the Bayes criterion or the minimax criterion.

\section*{Acknowledgment}

I am grateful to O.~Hirota, K.~Kato, and T.~S.~Usuda for support.
This work was supported by JSPS KAKENHI Grant Number JP19K03658.

\bibliographystyle{myieeetr}


\end{document}